\documentclass{article}
\title{On Adversary Robust Consensus protocols through joint-agent interactions}
\author{David Angeli and Sabato Manfredi}
\usepackage{geometry}
 \geometry{
 a4paper,
 total={170mm,257mm},
 left=20mm,
 top=20mm,
 }
\usepackage[dvips]{epsfig}    
\usepackage{amsmath,accents}%
\usepackage{amsfonts}%
\usepackage{amssymb,subfigure}%
\usepackage{graphicx}
\usepackage{amsfonts}
\usepackage{amssymb}
\usepackage{amsmath,epsfig,graphicx,array,setspace,subfigure,subfloat}
\usepackage{amsfonts}
\usepackage{amssymb}
\usepackage{color}
\usepackage{amsmath,epsfig,graphicx,array,setspace,subfigure,subfloat}
\usepackage{amsfonts}
\usepackage{amssymb}
\usepackage{amsmath}%
\usepackage{amsfonts}%
\usepackage{amssymb,subfigure}%
\usepackage{graphicx}
\usepackage{amsfonts}
\usepackage{amssymb}
\usepackage{color}
\usepackage{algorithm}
\usepackage{algpseudocode}
\pdfminorversion=4

\def\undertilde#1{\mathord{\vtop{\ialign{##\crcr
$\hfil\displaystyle{#1}\hfil$\crcr\noalign{\kern1.5pt\nointerlineskip}
$\hfil\tilde{}\hfil$\crcr\noalign{\kern1.5pt}}}}}

\newtheorem{lemma}{Lemma}

\newtheorem{definition}{Definition}

\newtheorem{theo}{Theorem}
\newenvironment{proof}[1][Proof]{\begin{trivlist}
\item[\hskip \labelsep {\bfseries #1}]}{\end{trivlist}}
\newcommand{\beq}{\begin{equation}}
\newcommand{\eeq}{\end{equation}}

\begin{document}
\maketitle
\begin{abstract}
A generalized family of Adversary Robust Consensus protocols is proposed and analyzed. These are distributed algorithms for multi-agents systems seeking to agree on a common value of a shared variable, even in the presence of faulty or malicious
agents which are updating their local state according to the protocol rules.
In particular, we adopt monotone joint-agent interactions, a very general mechanism for processing locally available information and allowing cross-comparisons between state-values of multiple agents simultaneously. 
The salient features of the proposed class of algorithms are abstracted as a Petri Net and convergence criteria for the resulting time evolutions formulated by employing structural invariants of the net. 
\end{abstract}
\section{Introduction and motivations}
Algorithms for consensus were introduced in \cite{tsitsi} a few decades ago, in the context of distributed optimization, a topic which remains of great interest still today, \cite{boyd}. \\
The role played by propagation of information in achieving consensus among interacting agents, was first highlighted in the seminal paper \cite{Moreau05}.
Therein, authors formulated tight and explicit graph-theoretical requirements for asymptotic consensus in time-varying linear update protocols, by abstracting the network of agents' interactions and its underlying dynamics as a graph. 
This sparked a considerable interest of the scientific community in advancing and applying consensus protocols for multi-agent systems (see i.e. \cite{murray_survey, Hendrickx12, Martin16} and references therein).
Subsequent developments in the theory of nonlinear consensus protocols have formalized and clarified the role of information spread along the graph of agents' interaction for more general situations, including second and higher order agents dynamics, or agents' states evolving on manifolds \cite{manifold} or nonlinear interactions \cite{Maggiore07, Manfredi_TAC}.
More recently, graph theoretical criteria have been similarly developed to encompass asymmetric confidence, as in the case of unilateral interactions \cite{ourpaper}, or joint-agent interactions, \cite{angelimanfredijoint}.\\

The latter, in particular, account for situations where individual agents impose ``filtering thresholds'' upon neighbours' influences by cross-validating their opinions through mutual comparisons that only allow for consistent infuences (either from above or below) of  two or more neighbouring agents to be enacted upon.  
In this regard, unlike the majority of existing consensus protocols that implicitly assume `additive' dynamics and exhibit variation rates as a disjunctive combination (sum) of neighbors influences, joint-agent interactions allow the formulation of conjunctive 
influences, and, respectively, of their additive combination.  \\

It is worth stressing that complex contemporary social and engineering systems  often need to deal with selfish or malicious users, node faults and attacks (\cite{Survey1}- \cite{Survey5}). In this respect the evaluation of individual and group reputation play a focal role for the safety of such systems. In the last years different (centralised and distributed) algorithms have been proposed to deal with online reputation estimation of both individual (\cite{back1}-\cite{back6}) and group (clustering all users according to their rating similarities - \cite{Group_rank1,Group_rank2}) to providing incentives to users acting responsibly and cooperatively.  

Within this line of investigation,  the problem of Adversary Robust Consensus Protocols (ARC-P) was formulated in \cite{leblanc1}, following earlier seminal results in \cite{pease}. Therein, Leblanc and coworkers propose and analyze a discrete time protocol which allows $n$ cooperating agents to converge towards a consensus state, within a complete all-to-all network, even when
a subset of agents (of cardinality up to $\lfloor n/2 \rfloor$) is \emph{malicious} or \emph{faulty}, namely it evolves in a completely arbitrary way, with the sole constraint of broadcasting its own state to all remaining agents.
The proposed protocol simply orders state values in ascending (or descending order) and removes  $F$ top and lowest values from the ordered list, where $F$ is an apriori fixed bound to the number of malicious agents. Then, the average among the remaining values is computed and a standard linear consensus update equation is applied. \\

Subsequent analysis has been devoted in \cite{leblanc3} to the important topic of relaxing the all-to-all topology requirement and investigating sufficient conditions for Adversary Robust Consensus on the basis of local information only, or in the presence of so called Byzantine agents \cite{leblanc2}, who may, either intentionally or due to faulty conditions, communicate different state values to different neighbours. A related line of investigation assumes the presence of trusted nodes, \cite{abbas}.\\

The interconnection topology is interpreted, in such context, as a specific type of switching (time-varying) linear consensus, arising through the application of the so called \emph{sorting} function, its composition with the \emph{reducing} function (responsible for discarding highest and lowest values) and finally by averaging the entries of the vector obtained. It turns out, however, that similar types of agents interactions can also be recast within the framework of joint-agent interactions.


The formalism of joint-agent interactions is, in this respect, even more flexible and may, for instance, allow to partition neighbours of every agent in several subgroups, to be suitably \emph{sorted, reduced} and \emph{averaged} while adding (possibly with different weights) the influences resulting from distinct subgroups as a final step.
This type of rules for processing local information results in consensus protocols which allow different levels of trust attributed to different set of neighbors and, generally speaking, break the simmetry implicit in the use of a single sorting and reducing function.  \\

The extended class of intrinsically nonlinear consensus protocols afforded by the use of joint-agent interactions can be conveniently described and characterized, from a topological point of view, as \emph{bipartite graphs}, and more specifically Petri Nets.
It turns out that structural notions, developed in the context of Petri Nets to ascertain their liveness as Discrete Event Systems, play a crucial role in characterizing the ability of a network of agents to reach consensus regardless of initial conditions
\cite{angelimanfredijoint}. \\

The specific details of the conditions needed for this to happen will be illustrated in a subsequent Section. Nevertheless, it is intuitive that if, on one hand, application of conjunctive filtering conditions among neighbouring agents limits the spread of information across the network (and therefore, if not done carefully may prevent consensus from happening at all), on the other,  it only allows ``trustworthy'' information to be propagated, and therefore may result (if carefully deployed) in Adversary Robust Consensus protocols.
In this paper we address the issue of when a network of agents, with arbitrary (and possibly asymmetric) interconnection topology (allowing for instance differentiated trust levels among neighboors) exhibits the ability to reach consensus despite a subset of its agents being either \emph{faulty} or \emph{malicious}, viz. able to influence other nodes according to their individual
state-value but, in fact, upgrading their position in a completely arbitrary fashion. 


Just to illustrate the potential of the approach, we present below simulations referring to an all-to-all network of $5$ agents, involving linear interactions, or a similar network entailing  joint-agent interactions.
Our theory allows to prove that, in the latter network, robustness can be achieved allowing any set of $2$ out of $5$ agents to be malicious or faulty, and still guaranteeing the remaining healthy agents will retain the ability to reach exact consensus.
In particular, we simulate the following linear network:
\begin{equation}
\label{linlike}
\dot{x}_i = \sum_{j \neq i} a_{ij} (x_j - x_i), \qquad i,j \in \{1 \ldots 5 \},  
\end{equation}
for some $a_{ij}>0$ and compare it with the following nonlinear all to all network:
\begin{equation}
\label{joint}
\dot{x}_i = \sum_{J  \subset \{1 \ldots 5 \} \backslash \{ i \}: |J|=3}
f_{J \rightarrow i} (x)
\end{equation}
where the function $f_{J \rightarrow i}: \mathbb{R}^n \rightarrow \mathbb{R}$ is defined below:
\[ f_{J \rightarrow i} (x) = \max_{j \in J} \; \min \{ x_j - x_i,0 \}   + \min_{j \in J} \; \max \{x_j - x_i ,0 \}. \]
We pick the initial condition $[35,10,5,15,20]'$ and run the two consensus protocols assuming that agents $4$ and $5$ are faulty and follow the apriori fixed time evolutions:
\[ x_4 (t) = 15 + \frac{\cos\!\left(3\, t\right) - 1}{9} + \frac{t\, \sin\!\left(3\, t\right)}{3} + \frac{t^3}{150} \]
\[
 x_5(t)=20 + \frac{\sin\!\left(2\, t\right)}{4} + \frac{t\, \left(2\, {\sin\!\left(t\right)}^2 - 1\right)}{2}. \]
These agents  effectively act as exogenous disturbance inputs for the remaining $3$ agents, and, from the practical point of view, may be regarded as faulty agents or malicious agents trying to disrupt the consensus.
In the case of equation (\ref{linlike}), as it is expected due to linearity and addivity of interactions, the exogenous disturbances $x_4$ and $x_5$ are able to spread their influence to the remaining agents and effectively prevent the remaining agents to asymptotically reach consensus (see Fig. \ref{lineardisruption}).
\begin{figure}
\centerline{
\includegraphics[width=12cm]{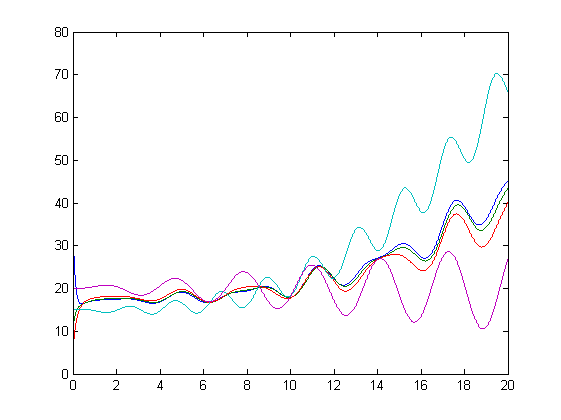}}
\caption{Linear consensus protocol subject to faulty agents $4$ and $5$}
\label{lineardisruption}
\end{figure} 
In the case of equation (\ref{joint}), instead, nonlinear joint-agent interactions allow the three``healthy'' agents, $1,2,3$ to asymptotically reach agreement within the convex-hull of their initial values, regardless of $x_4$ and $x_5$, (see Fig. \ref{jointrobust}).
\begin{figure}
\centerline{
\includegraphics[width=12cm]{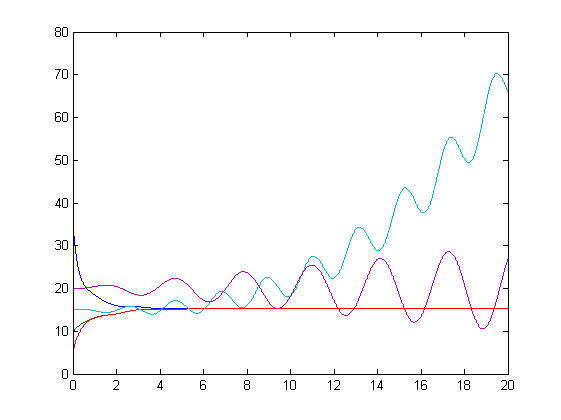}}
\caption{Joint-Agent consensus protocol subject to faulty agents $4$ and $5$}
\label{jointrobust}
\end{figure}
In other words, while the faulty agents may, to a certain extent, affect the final consensus value reached, they are unable to disrupt it.
\section{Problem formulation}
The aim of this note is to derive necessary and sufficient conditions to characterize when networks of agents implementing joint-agent interactions may be able to achieve \emph{robust} consensus  in the presence of possibly \emph{malicious} or \emph{faulty} agents.
In particular, we study networks described by the following class of nonlinear finite-dimensional differential equations:
\begin{equation}
\label{net}
\dot{x} = f(x)
\end{equation}
where $x \in \mathbb{R}^n$ is the state vector, and $f: \mathbb{R}^n \rightarrow \mathbb{R}^n$ is a Lipschitz continous function, describing the update laws of each agent as a function of its own and neighbours' state values.
For convenience we ask that $f_j$ be monotonically non-decreasing with respect to all $x_i$ ($i \neq j$) so that the resulting flow is monotone with respect to initial conditions, once the standard order induced by the positive orthant is adopted.
This assumption, while not essential, can to a certain extent simplify the analysis and the definition of interaction among agents.
Many authors, in recent years, have elaborated conditions under which solutions of (\ref{net}) asymptotically converge towards equilibriums of the following form:
\begin{equation}
\lim_{t \rightarrow + \infty} \varphi(t,x_0) = \bar{x} \textbf{1}
\end{equation}
 for some $\bar{x} \in \mathbb{R}$, where $\textbf{1}$ is the vector of all ones in $\mathbb{R}^n$. When this occurs for all solutions, and regardless of initial
conditions, we say that system (\ref{net}) achieves \emph{global asymptotic consensus}.  \\ 

In this note, however, we consider a more general situation in which the state vector $x$ is partitioned into two subvectors, $x_H$ and $x_F$, associated to \emph{healthy} and \emph{faulty} agents respectively.  Accordingly, we denote 
$f(x) = [f_H (x)',f_F(x)']'$ and  wish to characterize under what assumptions
solutions of 
\begin{equation}
\label{projectednet}
\dot{x}_H = f_H (x_H, x_F) 
\end{equation}
asymptotically converge to equilibria of the type:
\[ \lim_{t \rightarrow + \infty} x_H(t) = \bar{x} \textbf{1}_H \]
for all initial conditions $x_H(0)$ and all exogenous input signals $x_F(\cdot)$. A formal definition follows.
\begin{definition}
We say that network (\ref{net}) achieves robust consensus in the face of faults in $F \subset \mathcal{N}$, if,
partitioning the state vector according to $F$ and $H:=\mathcal{N} \backslash F$ yields:
\[ \lim_{t \rightarrow + \infty} \varphi_{H} ( t, x_H(0), x_F( \cdot) ) = \bar{x}   \textbf{1}_H \]
for all $x_H(0)$, and all uniformly bounded exogenous input $x_F(\cdot)$, (where $\varphi_H (t, x_H(0), x_F(\cdot))$ denotes the solution
of (\ref{projectednet}) at time $t$ from initial condition $x_H(0)$ and input $x_F(\cdot)$).
\end{definition}

In practice, for a given net, we will be interested in considering several possible combinations of faulty agents (corresponding to several choices of $F$) and, for each one of them, verify conditions for asymptotic convergence towards consensus of the remaining \emph{healthy} agents $H$. \\
\newline
In order to characterize the flow of information needed for achieving such kind of behaviour, we recall the notion of \emph{joint agent interaction}, as proposed in \cite{angelimanfredijoint}. 

\begin{definition}
\label{jointbilateral}
We say that a group of agents $I \subset \mathcal{N}$ jointly influences agent $j \in \mathcal{N} \backslash I$ if for all compact intervals $K \subset \mathbb{R}$ there exists a positive definite function $\rho$, such that, for all $x_I, x_j \in K$ it holds:
\begin{equation}
\label{jointinter}
 \textrm{sign}(x_I - x_j) f_j ( x_j \textbf{1} + (x_I-x_j) e_I ) \geq \rho ( |x_I - x_j | ). 
\end{equation}
We denote this by the following shorthand notation: $I \rightarrow j$.
\end{definition}
Notice that influence from $I$ to $j$, denoted as $I \rightarrow j$, is monotone (in its first argument $I$) with respect to set-inclusion.
In particular, if $j \in \mathcal{N} \backslash \tilde{I}$ we have:
\[ I \rightarrow j \textrm{ and } I \subset \tilde{I}   \; \Rightarrow \tilde{I} \rightarrow j. \] 
For this reason, it is normally enough to consider \emph{minimal} influences alone. We say that $I$ influences $j$ and that this influence is \emph{minimal} if
there is no $\tilde{I} \subsetneq I$ such that $\tilde{I} \rightarrow j$.

\section{Relevant Petri Net background}
Our goal is to derive characterizations of a graph theoretical nature regarding the ability of networks with joint-agent interactions to exhibit robust consensus, in the face of faults or malicious attacks.
We adopt, to this end, the formalism introduced in \cite{angelimanfredijoint}.
In particular, we represent multiagent networks as Petri Nets. These are a type of bipartite graph, used to model Discrete Event Systems, and can be conveniently adopted in the present study. In fact, a rich literature on structural invariants for Petri Nets already exists, including software libraries to compute them as well as complexity analysis of the available algorithms.

An (ordinary) Petri Net is a quadruple $\{ P, T, E_I , E_O \}$, where
$P$ and $T$ are finite sets (with $P \cap T = \emptyset$) referred to as
 \emph{places} and \emph{transitions}, respectively. These are nodes of a directed bipartite graph.
In fact, directed edges are of two types: $E_I \subset T \times P$ connecting transitions to places and $E_O \subset P \times T$ connecting places to transitions. \\

In our context places represent agents while transitions stand for interactions among them.
More closely, to each agent $i \in \mathcal{N}$ there exists a unique associated place $p_i \in P$.
Furthermore, if agents in $J \subset \mathcal{N}$ jointly influence agent $i$, this is denoted as $J \rightarrow i$ and, provided this interaction is minimal, it is represented graphically by
a single transition $t \in T$,  with edges $(p_j, t) \in E_O$ for all $j \in J$ and a single edge $(t,p_i )$ in $E_I$.
Notice that every transition can be assumed to only afford exactly one outgoing edge, unlike in general Petri Nets.
As an example, we show in Fig. \ref{netexamples} the graphical representation of the Petri Nets associated to the list of interactions
\begin{equation}
\label{simplestnet}
 \{ 1 \} \rightarrow 2, \; \{ 1,2 \} \rightarrow 3, 
\end{equation}
and, next to it, for a kind of ring topology with $5$ agents and the following list of minimal joint agent interactions:
\begin{equation}
\label{ring5}
 \{1,2\} \rightarrow 3, \; \{2,3\} \rightarrow 4, \; \{3,4\} \rightarrow 5, \; \{4,5 \} \rightarrow 1, \; \{5, 1 \} \rightarrow 2. 
\end{equation}
\begin{figure}
\centerline{
\includegraphics[height=3.5cm]{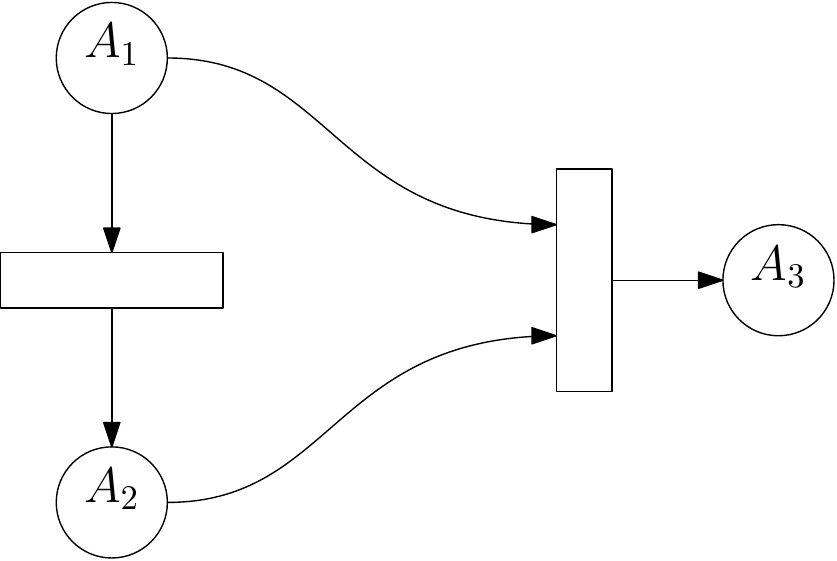} \,
\includegraphics[height=3.5cm]{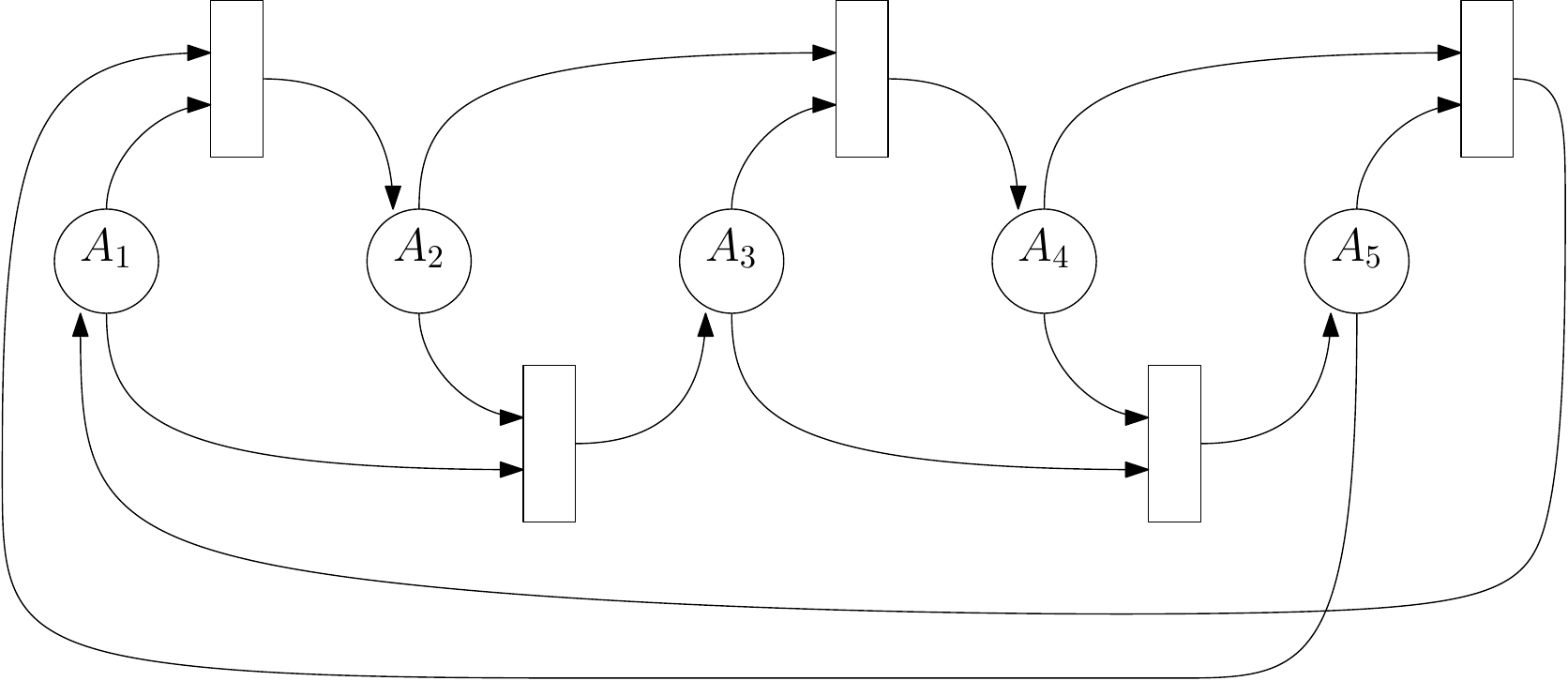} }
\caption{Petri Nets associated to network of interactions (\ref{simplestnet}) and (\ref{ring5}) }
\label{netexamples}
\end{figure}
The next concepts will be crucial in characterizing, from the topological point of view, networks that guarantee asymptotic convergence towards consensus.
The set of \emph{input transitions} for a place $p$, is denoted as
\[ I(p) = \{ t \in T: (t,p) \in E_I \}, \]
and, similarly for a set of places $S \subset P$, its input transitions are:
\[ I(S) = \{ t \in T: \exists \, p \in S: (t,p) \in E_I \}. \]
Simmetrically, output transitions are denoted as:
\[ O(S) = \{ t \in T: \exists \, p \in S: (p,t) \in E_O \}. \]
\begin{definition}
\label{siphondef}
A non-empty set of places $S \subset P$ is called a \emph{siphon}
if $I(S) \subset O(S)$.  A siphon is minimal if no proper subset is also a siphon.
\end{definition}
Informally, in a group of agents that correspond to a  siphon, any influence needs to come (at least in part) from within the group.
In \cite{angelimanfredijoint}, a  characterization of the ability of agents to asymptotically converge towards consensus (regardless of their initial conditions) is provided.
This feature, called \emph{structural consensuability}, is shown to be equivalent to the requirement that any pair of siphons in the associated Petri Net have non-empty intersection.

To address robustness questions, within the same set-up of joint agent interactions, an externsion of the concept of siphon is needed.
The following is, to the best of our knowledge, an original definition:
\begin{definition}
\label{cs}
A non-empty set of places $S \subset P$ is an $F$-controlled siphon, if:
\[ I(S) \subset O(S) \cup O(F). \]
\end{definition}
Notice that Definition \ref{cs} boils down to the standard notion of siphon for $F= \emptyset$.
Union of $F$-controlled siphons is again an $F$-controlled siphon and, in particular, if a set is an $F_1$-controlled siphon it is also an $F_2$-controlled siphon for all $F_2 \supseteq F_1$. We call the set $F$ the \emph{switch} of siphon $S$.
Informally, this terminology is adopted as malicious agents in $F$ may prevent healthy agents in $S$ from increasing (or decreasing) their own state values. This is indeed achievable by malicious agents simply broadcasting values which are either below the minimum or, respectively, above the maximum of all values within the siphon.

The following notions are appropriate to characterize occurrence of robust consensus.
\begin{definition}
\label{robconscond}
We say that a Petri Net fulfills robust consensuability with respect to faults in $F \subset \mathcal{N}$ if $H := \mathcal{N} \backslash F$ is a siphon and for all
pairs of controlled siphons $S_1$,$S_2$ and associated switches $F_1,F_2 \subset F$, we have the following:
\begin{equation}
\label{robcons}
 S_1 \cap S_2 = \emptyset \Rightarrow F_1 \cap F_2 \neq \emptyset. 
\end{equation}
\end{definition}  
It is worth pointing out that robust consensuability, when $F= \emptyset$, boils down to \emph{structural} consensuability as defined in \cite{angelimanfredijoint}. Also, a direct comparison with previously existing conditions for consensuability in networks where all influences are single-agent influences is not possible, as the condition would never be fulfilled. It is in fact use of joint-agent interactions and cross validations that make adversary robust consensus achievable.
On the other hand, we believe that our conditions boil down to those proposed in \cite{leblanc3} when only interactions obtained through a sorting and reducing function are allowed.


\section{Main result and  proofs}
In the following Section we state the main result and clarify the steps of its proof. To this end, in order to allow dynamical properties of a multiagent system to be derived on the basis of structural conditions fulfilled by the associated Petri Net, it is important to establish a closer link between the considered equations and the associated Petri Net. 
For any transition $t \in T$, denote by:
\[ I(t):= \{ p \in P : (p,t) \in E_O \} \]
and by $j(t)$ the unique place such that $(t, j(t) )$ belongs to $E_I$.
In particular, for a given Petri Net $\{P,T,E_I,E_O \}$ we consider non-decreasing
 locally Lipschitz functions $F_i: \mathbb{R}^{|I(p_i)|} \rightarrow \mathbb{R}$, with $F_i (0,0,\ldots,0)=0$, and such that $F_i$ is strictly increasing in each of its arguments in $0$.  These are employed to define networks of equations:
\begin{equation}
\label{specialstructure}
 \dot{x}_i = F_i ( f_{I(t_1) \rightarrow i} (x), f_{I(t_2) \rightarrow i} (x), \ldots, f_{I(t_{|O(p_i)|}) \rightarrow i} (x) ), \qquad O(p_i)=\{ t_1, \ldots t_{|O(p_i)|} \}.
 \end{equation}
A typical example arises when $F_i ( f ) = \sum_{k} \alpha_k f_k$, for some choice of coefficients $\alpha_k>0$.
Equation (\ref{specialstructure}) is, however, more general and allows non-additive agents' infuences.
As an example of a non-additive function $F_i$,  one may consider for instance the map $F_i (f) = \min_{k \in \{1, \ldots, |O(p_i)| \}} f_k + \max_{k \in \{1 \ldots |O(p_i)| \} } f_k$. \\
Composition of the above maps with monotonic increasing functions, such as saturations or (odd) powers are also legitimate choices,
i.e. $F_i ( f ) = \sum_{k} \alpha_k \textrm{sat} (f_k)$ or $F_i ( f ) = ( \sum_{k} \alpha_k f_k )^3$.

We are now ready to state our main result and, later, to discuss the technical steps of its derivation.
\begin{theo}
\label{mr}
Consider a cooperative network of agents as in (\ref{net}) and let $N$ be the Petri Net associated to its set of \emph{minimal} joint agent interactions. 
 Consider a partition of $\mathcal{N}$ into two disjoint subgroups $F, H \subset\mathcal{N}$, which represent the Faulty and the Healthy agents (respectively), along with the projected dynamics, (\ref{projectednet}). Then, robust consensus is achieved among the agents in $H$ provided $N$ fulfills robust consensuability with respect to faulty agents in $F$.  
 \end{theo}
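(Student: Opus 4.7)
The plan is a contradiction argument. Assume that for some uniformly bounded $x_F(\cdot)$ and some $x_H(0)$ the healthy trajectory fails to reach consensus, and derive from the hypothesis of Theorem \ref{mr} a structural overlap of siphon switches that is incompatible with the dynamics. First, because $f$ is cooperative and Lipschitz and $x_F$ is uniformly bounded, the comparison principle sandwiches $x_H(\cdot)$ between two reference trajectories obtained by freezing $x_F$ at its essential supremum and infimum; the siphon assumption on $H$ makes these reference systems well posed, and hence $x_H(\cdot)$ is bounded. Consequently
\[ \bar x := \limsup_{t \to \infty} \max_{i \in H} x_i(t), \qquad \underline x := \liminf_{t \to \infty} \min_{i \in H} x_i(t) \]
are finite. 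Failure of consensus reads $\bar x > \underline x$, and I fix $\eta$ with $0 < 4\eta < \bar x - \underline x$.

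The core step is to manufacture two disjoint controlled siphons from the asymptotic behaviour. Define
\[ S^+ := \{i \in H : \limsup x_i \geq \bar x - \eta\}, \qquad F^+ := \{ k \in F : \limsup x_k \geq \bar x - \eta\}, \]
and $S^-, F^-$ symmetrically via $\liminf x \leq \underline x + \eta$. By the choice of $\eta$, $S^+ \cap S^- = \emptyset$ and both sets are non-empty. To see that $S^+$ is an $F^+$-controlled siphon, suppose for contradiction that some transition $t$ outputting to $i \in S^+$ has no input in $S^+ \cup F^+$; then every input place $p \in I(t)$ satisfies $\limsup x_p < \bar x - \eta$, so that eventually $x_p(s) \leq \bar x - \eta - \delta$ for some $\delta > 0$. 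Invoking Definition \ref{jointbilateral} together with strict monotonicity of $F_i$ at zero in the $t$-th slot of (\ref{specialstructure}), the contribution of $t$ to $\dot x_i(s)$ is at most $-\gamma$ for some $\gamma > 0$ whenever $x_i(s) \geq \bar x - \eta - \delta/2$. A Barbalat-type integration over the time $x_i$ spends in this band then forces $\limsup x_i < \bar x - \eta$, contradicting $i \in S^+$. The symmetric argument gives that $S^-$ is an $F^-$-controlled siphon.

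The hypothesis of robust consensuability now yields $F^+ \cap F^- \neq \emptyset$, but this is not yet a dynamical contradiction, since a faulty $k \in F^+ \cap F^-$ merely has $\limsup x_k \geq \bar x - \eta$ and $\liminf x_k \leq \underline x + \eta$, which is compatible with oscillation. To close the gap, I would sharpen the siphon construction so that the relevant switches are in fact the persistence-based refinements $\tilde F^+ := \{k : \liminf x_k \geq \bar x - 2\eta\}$ and $\tilde F^- := \{k : \limsup x_k \leq \underline x + 2\eta\}$, which by the choice of $\eta$ are disjoint. The principal technical obstacle of the proof lies precisely here: one must show that only faulty agents with persistent (not merely intermittent) high values can sustain a transition capable of pushing $S^+$ upwards against the negative drift coming from transitions with inputs outside $S^+ \cup \tilde F^+$, so that $\tilde F^+$ is indeed a valid switch for $S^+$ (and symmetrically for $\tilde F^-$). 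The all-or-nothing character of joint-agent influence — a transition produces a positive contribution only when \emph{every} input sits above $x_i$ — is what makes the refinement plausible, as a brief dip of any faulty input annihilates the positive contribution; making this quantitative, with the Barbalat argument still yielding strict negative drift in the refined regime, is the delicate heart of the proof.
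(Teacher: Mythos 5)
Your proposal sets up the right contradiction framework (boundedness, the limiting extremes $\bar x$ and $\underline x$, and the identification of ``extremal'' agent sets as controlled siphons), but it has two genuine gaps, and the second is fatal to the route you chose. First, the claim that $S^+:=\{i\in H:\limsup x_i\geq \bar x-\eta\}$ is an $F^+$-controlled siphon is not established by your Barbalat sketch: a single transition contributing $-\gamma$ does not force $\dot x_i<0$, because $F_i$ in (\ref{specialstructure}) aggregates \emph{all} transitions into $i$, and when $x_i$ sits in the band $[\bar x-\eta-\delta/2,\bar x-\eta]$ other healthy agents of $S^+$ may lie strictly above $x_i$ and feed positive contributions that offset the negative one. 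The paper's proof avoids this by passing to the $\omega$-limit set of an autonomous differential inclusion (the convexification of $f_H(\cdot,x_F)$ over $x_F\in K$): there the running maximum is exactly the constant $\bar x_H^{\infty}$, the set $M(t)$ of agents \emph{exactly} at that value is non-increasing and hence eventually frozen with $\dot{\tilde z}_h=0$, and the siphon property of $H$ guarantees every transition into such an $h$ contributes non-positively, so a single strictly negative contribution already yields a contradiction. Working with exact maxima on the limit set, rather than $\eta$-bands along the trajectory, is what makes the ``$M(\tau)$ is a controlled siphon'' step airtight.

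Second, and more importantly, your endgame goes in the wrong direction. You try to manufacture two \emph{disjoint} switches $\tilde F^+,\tilde F^-$ (persistently high / persistently low faulty agents) so as to contradict the implication in Definition \ref{robconscond}, and you correctly flag that proving these refined sets are valid switches is unresolved; this persistence analysis is precisely the hard part you have not done, and it is not how the condition is meant to be used. The paper instead \emph{accepts} that the switches intersect: it takes $F_M,F_m$ \emph{minimal} switches for $M(\tau)$ and $m(\tau)$, picks a common element $\bar f\in F_M\cap F_m$ supplied by robust consensuability, and uses minimality to exhibit interactions $I_M\rightarrow h_M$ and $I_m\rightarrow h_m$ in which $\bar f$ is the only agent capable of blocking the downward (resp.\ upward) drift. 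Since $x_{\bar f}(t)$ is a single scalar, at each instant it cannot be simultaneously $\geq\bar x_H^{\infty}$ and $\leq\underline x_H^{\infty}$ when $\bar x_H^{\infty}>\underline x_H^{\infty}$, so at least one of $f_{h_M}<0$ or $f_{h_m}>0$ holds strictly at every time; hence $\tilde z_{h_M}-\tilde z_{h_m}$ strictly decreases, contradicting (\ref{isconstant})--(\ref{isconstant2}). This pointwise dichotomy on the position of the shared switch agent replaces your persistence argument entirely, and the minimality of the switches --- absent from your proposal --- is the ingredient that makes it work. To repair your proof you would either have to carry out the delicate persistence refinement you describe (whose validity is unclear), or restructure the final step along the paper's lines.
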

It is worth pointing out that the result assumes faulty agents are following arbitrary continuous evolutions and that these are reliably broadcast to all neighbouring agents. This hypothesis cannot model the situation in which malicious agents intentionally communicate different evolutions to different neighbors. Agents with this ability are usually referred to as Byzantine agents, and Byzantine consensus protocols exhibit robustness to such kind of threats.
Notice that the ability of malicious agents of differentiating the information sent to neighbors may disrupt consensus even when robust consensuability is fulfilled. An example of this situation is later shown in Section \ref{simuex}.  \\

We start the technical discussion by generalizing Proposition 11 in \cite{angelimanfredijoint}.
\begin{lemma}
\label{invlemma}
Let $H$ be a siphon of $N$. Consider a network of equations (\ref{specialstructure}) and let $x_H$ denote the state vector of agents in $H$, along with the corresponding
equations
\begin{equation}
\dot{x}_H (t) = f_H ( x_H(t), x_F(t) )
\end{equation}
as introduced in (\ref{projectednet}). Then, for any $c \in \mathbb{R}$, the sets:
\[ \bar{\mathcal{X}}_c := \{ x_H \in \mathbb{R}^{|H|}: x_H \leq c \textbf{1}_H \}, \]
\[ \underline{ \mathcal{X}}_c := \{ x_H \in \mathbb{R}^{|H|}: x_H \geq c \textbf{1}_H \} \]
are robustly forward invariant for any bounded input signal $x_F (\cdot)$.
\end{lemma}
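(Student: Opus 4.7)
The plan is to apply Nagumo's tangent-cone characterization of forward invariance to the closed polyhedral sets $\bar{\mathcal{X}}_c$ and $\underline{\mathcal{X}}_c$. I treat only the former, since the argument for the latter is entirely symmetric. Concretely, it suffices to show that at any boundary point $x_H$ where some coordinate $x_i = c$ (with $i \in H$), the corresponding component of the vector field satisfies $\dot{x}_i \leq 0$, uniformly over all admissible exogenous signals $x_F$.

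First I would fix such a boundary point and expand $\dot{x}_i$ according to (\ref{specialstructure}) as a monotone aggregation, via $F_i$, of the joint-agent contributions $f_{I(t_k) \to i}(x)$ indexed over the transitions $t_k$ whose output place is $p_i$. The key structural step is to invoke the siphon hypothesis $I(H) \subset O(H)$: each such $t_k$ lies in $I(H)$, hence also in $O(H)$, so there exists a place $p_{l_k}$ with $l_k \in I(t_k) \cap H$, and consequently $x_{l_k} \leq c = x_i$.

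Second, I would exploit the conjunctive character of joint-agent interactions, as in the proof of Proposition 11 of \cite{angelimanfredijoint} and the discussion surrounding Definition \ref{jointbilateral}, to argue that the presence of a single input $l_k \in H$ with $x_{l_k} \leq x_i$ forces $f_{I(t_k) \to i}(x) \leq 0$, regardless of how large the remaining inputs in $I(t_k) \cap F$ may be. Combined with the hypotheses that $F_i$ is non-decreasing in each argument and satisfies $F_i(0,\ldots,0) = 0$, this yields $\dot{x}_i \leq F_i(0,\ldots,0) = 0$, establishing invariance of $\bar{\mathcal{X}}_c$.

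The hard part is not the Nagumo tangent-cone step, which is essentially a routine closure-under-flow argument on a polyhedron with axis-aligned faces, but the rigorous justification of this ``veto'' property: one must verify, from the abstract joint-agent definition and the monotonicity of $f$, that a single input at or below the boundary level suffices to nullify any positive contribution from a transition, even when the remaining inputs lie in $F$ and are driven to arbitrarily large values. Once this is secured, robustness comes for free, because the siphon structure removes every quantitative dependence on $x_F$: the exogenous signal enters only through values that are structurally dominated by the $H$-side veto, so uniform boundedness of $x_F(\cdot)$ is not even quantitatively needed for invariance to hold.
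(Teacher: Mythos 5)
Your proposal follows essentially the same route as the paper's proof: Nagumo's tangent-cone condition reduced to showing $\dot{x}_h \leq 0$ at a boundary agent with $x_h = c$, the siphon property supplying a healthy input place $\tilde{h} \in I(t_k) \cap H$ for every input transition of $p_h$, the conjunctive ``veto'' giving $f_{I(t_k) \rightarrow h}(x) \leq 0$ term by term (the paper writes this as $f_{I(t_k)\rightarrow h}(x) \leq f_{I(t_k)\rightarrow h}(\bar{x}_H \mathbf{1}_H, x_F) = 0$ via monotonicity), and finally monotonicity of $F_h$ with $F_h(0,\ldots,0)=0$. The step you flag as needing care is exactly the one the paper also handles with a one-line monotonicity-plus-normalization argument, so the two proofs match in both structure and level of detail.
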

\begin{proof}
Let $x_F(\cdot)$ take value in the compact set $K$ and $h \in H$ be any agent whose associated state value fulfills $x_h=c$.  To prove invariance of 
$\bar{\mathcal{X}}_c$ we need to show
$f_h (x_H,x_F) \leq 0$ for all $x_F \in K$. This condition, in fact, amounts to $f (x_H,x_F) \in TC_x ( \bar{ \mathcal{X}}_c )$ for all 
$x_H \in \partial   \bar{\mathcal{X}}_c $ and all $x_F \in K$. This, in turn implies forward invariance of $\bar{\mathcal{X}}_c$ by Nagumo's Theorem. 
Let $O(p_h) = \{ t_1, t_2, \ldots, t_{|O(p_h)|} \}$. Since $h$ is a siphon, for all $t_i$ in $O(p_h)$ there exists
$\tilde{h} \in I (t_i) \cap H$. Hence:
\[ f_{ I (t_i) \rightarrow h } (x) \leq f_{ I (t_i) \rightarrow h } (\bar{x}_H \textbf{1}_H, x_F ) = 0. \]
By monotonicity of $F_h$ then:
\[
  \dot{x}_h = F_h ( f_{I(t_1) \rightarrow h} (x), f_{I(t_2) \rightarrow h} (x), \ldots, f_{I(t_{|O(p_h)|}) \rightarrow h} (x) ) \leq F_h (0,0,\ldots,0) = 0. \]
This completes the proof of the Lemma. 
\end{proof}
The rest of the Section is devoted to illustrate the main technical steps of the proof.
\begin{proof} Let $x_F (t)$ be an arbitrary bounded, continuous signal. Assume, in particular, that $x_F(t) \in K$ for some compact set $K \subset \mathbb{R}^{|F|}$.
Pick any initial agent distribution $x_H(0)$ and define the evolution of healthy agents in $H$ according to the equation (\ref{projectednet}).
In particular, we denote the solution $x_H(t) := \varphi_H (t, x_H(0), x_F( \cdot ) )$, for all $t \geq 0$.   
Moreover, we let:
\[ \bar{x}_H  := \max_{h \in H} x_h  \qquad \underline{x}_H  := \min_{h \in H}  x_h. \]
Since $H$ is a siphon, by Lemma \ref{invlemma}, we see that for all $t_2 \geq t_1 \geq 0$:
\[ x_H (t_1) \in \bar{\mathcal{X}}_{\bar{x}_H(t_1)} \Rightarrow x_H(t_2) \in \bar{\mathcal{X}}_{\bar{x}_H(t_1)}. \]
In particular then, $\bar{x}_H(t_1) \geq \bar{x}_H (t_2)$, viz. $\bar{x}_H$ is monotonically non-increasing. As expected, a symmetric argument shows that
 $\underline{x}_H (t)$ is  monotonically non-decreasing. Therefore $x_H(t)$ is uniformly bounded and the limits
\begin{equation}
\label{limvalues}
 \bar{x}_H^{\infty}  := \lim_{t \rightarrow + \infty} \bar{x}_H(t) \qquad  \underline{x}_H^{\infty} := \lim_{t \rightarrow + \infty} \underline{x}_H (t), 
\end{equation}
exist finite. 
For future reference, it is convenient to define the convex-valued differential inclusion given below:
\begin{equation}
\label{timeinvariantdinc}
\dot{z} \in F_H (z) := \textrm{co} \left ( \bigcup_{x_F \in K} \{ f_H (z,x_F) \}  \right ).
\end{equation}
Due to compactness of $K$, and Lipschitz continuity of $f_H$, $F_H$ is a Lipschitz continuous set-valued map.
In particular, $x_H(t)$, is also a (bounded) solution of (\ref{timeinvariantdinc}).
Consider next the associated $\omega$-limit set, which, by boundedness of $x_H(t)$, is non-empty and compact:
\begin{equation}
\Omega_H := \left  \{ x \in \mathbb{R}^{|H|}: \exists \, \{  t_{n} \}_{n=1}^{+ \infty}: \lim_{n \rightarrow + \infty} t_n = + \infty \textrm{ and } x = \lim_{n \rightarrow + \infty} x_H (t_n) \right \}.
\end{equation}
Notice that, by definition, for any $z_H \in \Omega_H$ we have $\bar{z}_H = \bar{x}_H^{\infty}$ and $\underline{z}_H =  \underline{x}_H^{\infty}$.
As is well known, $\Omega_H$ is a weakly invariant set for the differential inclusion (\ref{timeinvariantdinc}).
Selecting any element $\tilde{z}_H$ in $\Omega_H$, there exists at least one viable solution $\tilde{z}_H (t)$ of (\ref{timeinvariantdinc}), such that
$\tilde{z}_H (t) \in \Omega_H$, for all $t$.
Notice that, by Lipschitzness of $F_H$, the sets 
\[ M(t) := \{ h \in H: \tilde{z}_h (t) = \bar{x}_H^{\infty} \}, 
\]
and
\[ m(t) :=  \{ h \in H: \tilde{z}_h (t) = \underline{x}_H^{ \infty} \} \]
are monotonically non-increasing with respect to set-inclusion and, trivially, non-empty for all $t \geq 0$.
Hence, there exists some finite $\tau\geq 0$ such that $M(t)=M(\tau)$ and $m(t)=m (\tau)$ for all $t \geq \tau$.
Moreover, for all such values of $t$, we see that:
\begin{equation}
\label{isconstant}
 \dot{\tilde{z}}_h (t) = 0 \qquad \forall \, h \in M( \tau),
\end{equation}
  and similarly 
\begin{equation}
\label{isconstant2}
 \dot{\tilde{z}}_h (t) = 0 \qquad \forall \, h \in m( \tau).
\end{equation} 
To prove asymptotic consensus, we need to show $M(\tau) \cap m(\tau) \neq \emptyset$. 
To this end, we claim that there exists $F_M \subset F$ such that $M( \tau )$ is an $F_M$ controlled siphon, as we argue next  by contradiction. \\

Should this not happen, at least some $h$  would exist in $M(t)$ and $I \subset \mathcal{N}$ such that $I \rightarrow h$ and still $I \cap ( M(t) \cup F ) = \emptyset$. In particular then, $\bar{ \tilde{z}}_I(t):= \max_{i \in I}  \tilde{z}_i (t) < \bar{x}_H^{ \infty}$ and this violates (\ref{isconstant}) by virtue of definition (\ref{jointinter}) as for all $x_F \in K$:
\[  f_h (\tilde{z}_H (t), x_F)  \leq f_h ( \bar{x}_H^{\infty} \textbf{1}_H + (  \bar{\tilde{z}}_I (t) - \bar{x}_H^{\infty} )  e_I, x_F ) \qquad \qquad \qquad \]
\[ \qquad \qquad \qquad \qquad \leq - \rho ( \bar{x}_H^{\infty}-\bar{\tilde{z}}_I (t)  )<0. \]
A similar argument can be used to show that $m(t)$ is an $F_m$ controlled siphon. \\

Consider next any switch pairs $F_M, F_m \subset F $ such that $M( \tau )$ and $m( \tau)$ are, respectively, an $F_M$ and $F_m$ controlled siphon.
Assume, without loss of generality, $F_M$ and $F_m$ minimal with respect to set inclusion (among similar siphons' switches). \\
    
In the following we argue by contradiction considering the case $M( \tau ) \cap m( \tau ) = \emptyset$. 
By structural consensuability this implies $F_M \cap F_m \neq \emptyset$ and we may pick $\bar{f} \in  F_M \cap F_m$.
By minimality of $F_M$ and $F_m$, moreover, taking out $\bar{f}$ from them violates the definition of controlled siphon, viz. there exist $h_M \in M(\tau)$ and $h_m \in m(\tau)$ (distinct from each other), such that
for some joint interactions $I_M \rightarrow h_M$ and $I_m \rightarrow h_m$ we see that
\begin{equation}
\label{fshouldbeabove}
 I_M \cap (  M(t) \cup ( F_M \backslash \{ \bar{f} \} )  ) = \emptyset,
\end{equation}
and, similarly,
\begin{equation}
\label{fshouldbebelow}
 I_m \cap ( m(t) \cup ( F_m \backslash \{ \bar{f} \} )  ) = \emptyset. 
\end{equation}


Since $H$ is a siphon, however, $f_{h_M} (z, x_F) \leq 0$ for all $z \in \Omega_H$ and all $x_F \in K$.  
Moreover condition (\ref{fshouldbeabove}) yields:
\[ x_{\bar{f}} < \bar{x}_H^{\infty} \Rightarrow
f_{h_M} ( \tilde{z}_H(t), x_F ) \leq - \rho ( \bar{x}_H^{\infty} - \max \{ \bar{\tilde{z}}_{I_M}(t), x_{\bar{f}} \} ) < 0.\] 
Similarly, $f_{h_m} (z,x_F) \geq 0$ for all $z \in \Omega_H$ and all $x_F \in K$. In addition,
\[ x_{\bar{f}} > \underline{x}_H^{\infty} \Rightarrow f_{h_m} ( \tilde{z}_H(t), x_F ) \geq  \rho ( \min \{ \underline{\tilde{z}}_{I_m}(t), x_{\bar{f}} \} - \underline{x}_H^{\infty} ) > 0. \]
Notice that, whenever $\underline{x}_H^{\infty} <  \bar{x}_H^{\infty} $, we have  $(- \infty,  \bar{x}_H^{\infty} ) \cup ( \underline{x}_H^{\infty}, + \infty ) = \mathbb{R}$ and therefore,
\[ f_{h_M} ( \tilde{z}_H(t), x_F ) -  f_{h_m} ( \tilde{z}_H(t), x_F )  \qquad \qquad \qquad \qquad \] \[\leq - \rho ( \min \big \{  \bar{x}_H^{\infty} - \max \{ \bar{\tilde{z}}_{I_M}(t),  (\underline{x}_H^{\infty} +  \bar{x}_H^{\infty})/2 \},  \min \{ \underline{\tilde{z}}_{I_m}(t), (\underline{x}_H^{\infty} + 
 \bar{x}_H^{\infty})/2 \} - 
\underline{x}_H^{\infty} \big \} ). \]
As a consequence:
\[  ( e_{h_M} - e_{h_m} )' F_H ( \tilde{z}_H(t), x_F) \qquad \qquad \qquad \]
\[ \leq - \rho ( \min \big \{  \bar{x}_H^{\infty} - \max \{ \bar{\tilde{z}}_{I_M}(t),  (\underline{x}_H^{\infty} +  \bar{x}_H^{\infty})/2 \},  \min \{ \underline{\tilde{z}}_{I_m}(t), (\underline{x}_H^{\infty} + 
 \bar{x}_H^{\infty})/2 \} - 
\underline{x}_H^{\infty} \big \} ) < 0 \]
for all $x_F \in K$.
This, however, contradicts either (\ref{isconstant}) or (\ref{isconstant2}). \\
\end{proof}
Notice that, in the proof of Theorem \ref{mr},  it is crucial that the position communicated by malicious agents to all of its neighbors are consistent.
If not, malicious agents could more easily prevent consensus by sending differentiated signals to individual agents, and a correspondingly stronger notion of structural consensuability would be needed.
\section{Examples and Simulation}
\label{simuex}
We consider next an example with $9$ agents arranged in a $3 \times 3$ grid.
Each agent is denoted by an (ordered) pair of integers in $\{1,2,3\}:=N$. In particular then $\mathcal{N}= N \times N$.
We consider the following interconnection topology.
For all $(i,j) \in N \times N$, we have two joint-agent interactions:
\[ ( N \backslash \{i \} ) \times \{ j \} \rightarrow (i,j) \]
\[ \{ i \} \times ( N \backslash \{j\} ) \rightarrow (i,j). \]
The associated Petri Net is shown in Fig. \ref{petrimess}.
\begin{figure}
\centerline{
\includegraphics[width=9cm]{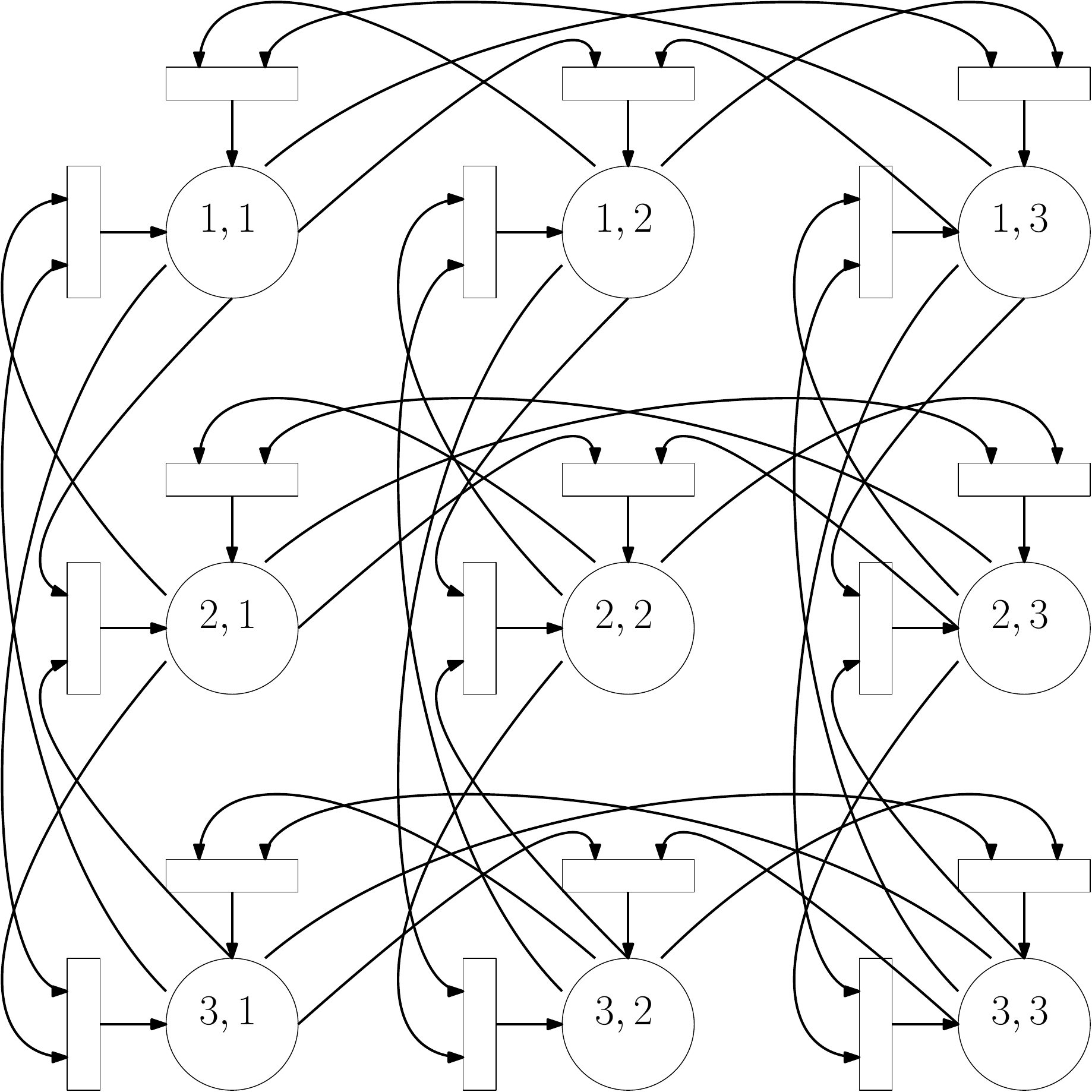}}
\caption{Petri Net associated to joint-agent interactions}
\label{petrimess}
\end{figure}
Notice that, by construction, whenever an agent belongs to a siphon, somebody from the same column and row also needs to be within the siphon.
Let, for a set $\Sigma \subset N \times N$, $\Sigma_i$ denote the elements of $\Sigma$ belonging to $\{i\} \times N$ and, by $\Sigma^j$ the elements of $\Sigma$ in $N \times \{j \}$.
We see that $\Sigma$ (non-empty) is a siphon if and only if $|\Sigma_i| \geq 2$ for all $i \in N$ such that $|\Sigma_i|>0$ and 
$|\Sigma^j| \geq 2$ for all $j \in N$ such that $|\Sigma^j|>0$.
In particular, minimal siphons fulfill the equality rather than the strict inequality and are essentially of two kinds, as shown in Fig.
\ref{basicsi}.
It is straightforward to see that, despite having siphons of cardinality strictly smaller than half of the size of the group ($4 < 9/2$ ), still their layout ensures that two siphons will always share at least one element. Therefore, structural consensuability is fulfilled and asymptotic consensus is guaranteed in the absence of faulty agents for all initial conditions. \\

 \begin{figure}
\centerline{
(a) \includegraphics[width=4.5cm]{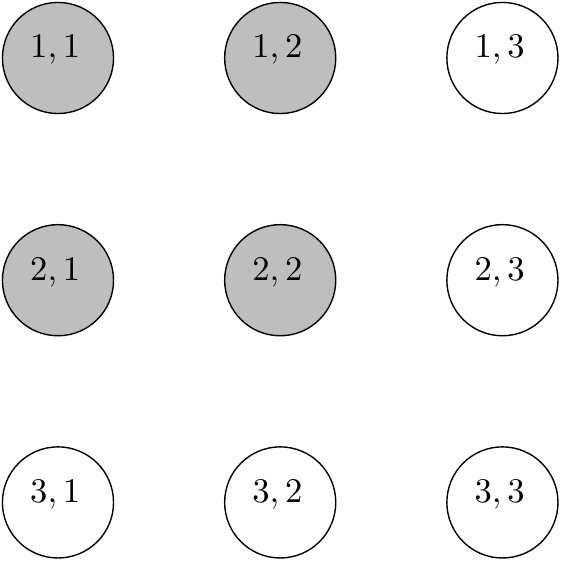} \qquad \qquad
(b) \includegraphics[width=4.5cm]{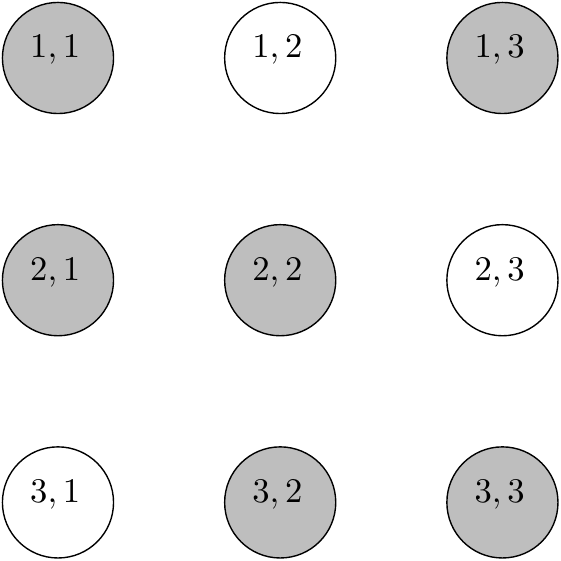}
}
\caption{Siphons of minimal support (gray)}
\label{basicsi}
\end{figure}

Next we investigate the possibility of achieving robust consensus in the presence of a single faulty agent. Due to the simmetry of the network we may choose any agent to be faulty, and the corresponding analysis will apply to any other possible agent after some permutations.
For ease of graphical representation we choose the faulty agent to be $(2,2)$. \\

First of all it is easily seen that $N \times N \backslash \{ (2,2) \}$ is a siphon.
Moreover, any siphon of the full Petri Net that does not contain $(2,2)$ is also a $\emptyset$-controlled siphon when $F= \{ (2,2) \}$.
Next, we look for $(2,2)$-controlled siphons. These are siphons for which agent $(2,2)$ acts as a switch.
It can be seen that $\Sigma$ is a $(2,2)$ controlled siphon if (and only if) $\Sigma \cup \{(2,2)\}$ is a siphon.
This direct implication is true for all Petri Nets, but the converse need not hold in general.
In particular, then, only two types of controlled siphons can be identified (up to permutations), as shown in Fig. \ref{contrsi}.

 \begin{figure}
\centerline{
(a) \includegraphics[width=4.5cm]{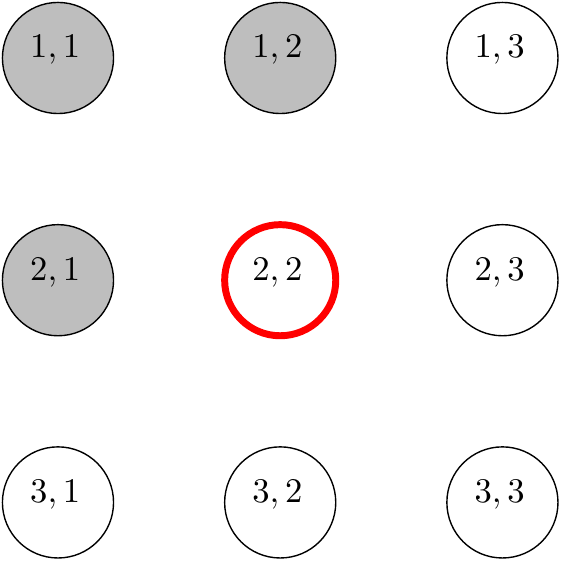} \qquad \qquad
(b) \includegraphics[width=4.5cm]{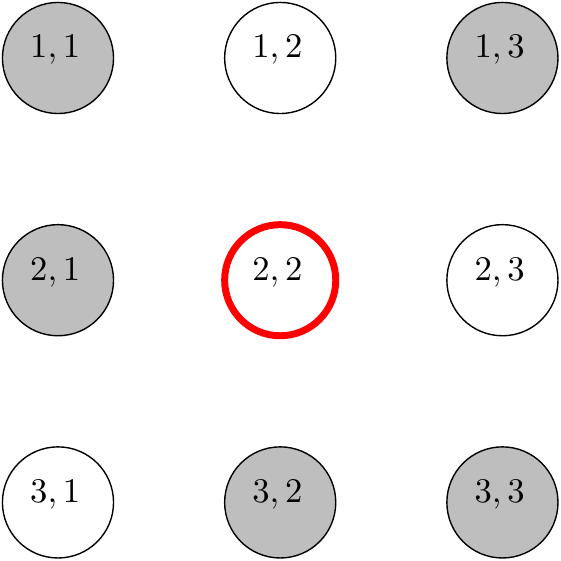}}
\caption{$(2,2)$-controlled siphons of minimal support (gray)}
\label{contrsi}
\end{figure}

Because of this, for any pair of $\emptyset$ or $(2,2)$ controlled siphons $\Sigma_1$, $\Sigma_2$, with associated switches $F_1$, $F_2$ it is true that
\[ (\Sigma_1 \cup F_1) \cap (\Sigma_2 \cup F_2) \neq \emptyset. \]
And, since by definition $\Sigma_i \cap F_i = \emptyset$, then the following is true:
\[ \Sigma_1 \cap \Sigma_2 = \emptyset \Rightarrow F_1 \cap F_2 \neq \emptyset. \]

Hence, robust structural consensuability is fulfilled and one may expect the $8$ healty agents to reach asymptotic consensus despite the exogenous disturbance coming from agent $(2,2)$. As previously remarked, this is still true for all possible selections of a single faulty agent. \\
\newline
Next we explain why this network is not able, in general, to withstand more than a single faulty agent. If two faulty agents occur, either in the same row or column, then the set of healty agents will exhibit a row or a column with a single agent. This implies that the set of healthy agents is not a siphon. Hence robust structural consensuability is not fulfilled. Indeed, the two agents have the ability to influence the agent within the same column (or row) and disrupt its ability to reach asymptotic consensus. \\

Consider next the case of two faulty agents that are not in the same row or column. For instance agents $(2,2)$ and $(3,3)$. Let 
$F = \{ (2,2), (3,3) \}$. As we already characterized s$\emptyset$ controlled siphons and siphons controlled by switch of cardinality one, we need only look for $F$-controlled siphons.
Any set $\Sigma$ such $\Sigma \cup F$ is a siphon is also an $F$-controlled siphon. In addition, the network exhibits two types of $F$-controlled siphons that do not fulfill such condition.
These are shown in Fig. \ref{f1f2si}. \\
 \begin{figure}
\centerline{
(a) \includegraphics[width=4.5cm]{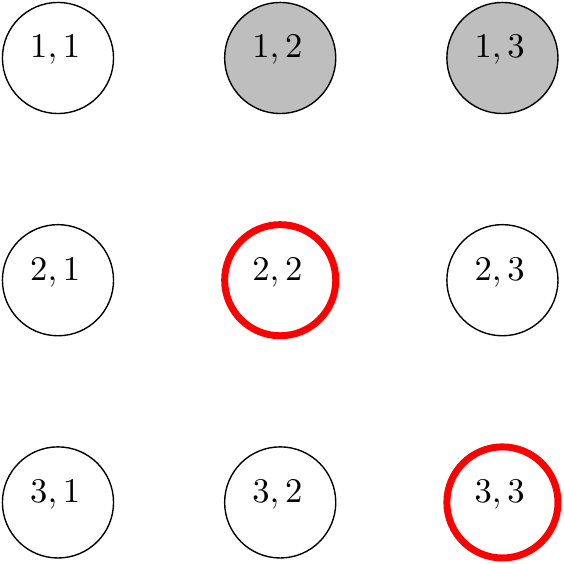} \qquad \qquad
(b) \includegraphics[width=4.5cm]{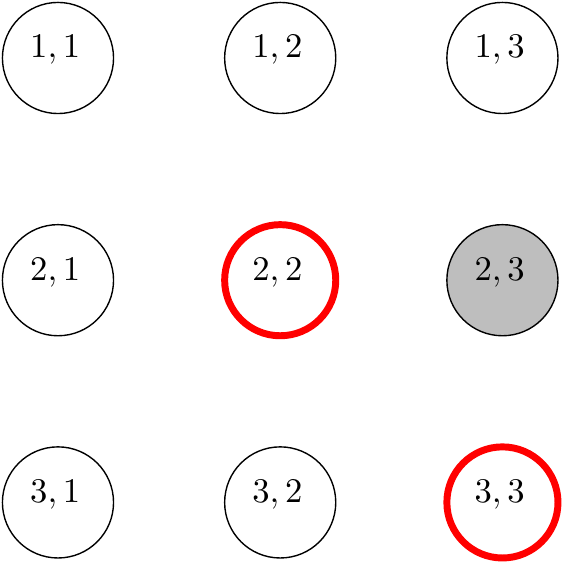}}
\caption{$\{ (2,2), (3,3) \}$-controlled siphons of minimal support (gray)}
\label{f1f2si}
\end{figure}
Notice that, $\{ (2,3) \}$ is an $F$-controlled siphon. On the other hand, the set $\{(1,1),(1,2),(3,1),(3,2)\}$ is an $\emptyset$-controlled siphon.
These two controlled siphons and their associated switches have both empty intersection. Hence, robust structural consensuability does not hold for this choice of faulty agents.
Indeed agents in $F$ have the ability to prevent consensus between the agents in the siphons described above.
\begin{figure}
\centerline{
\includegraphics[width=10cm]{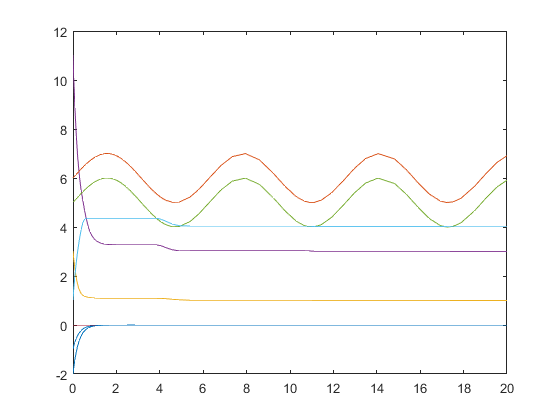}}
\caption{Malicious agents $(2,2)$ and $(3,3)$ disrupt consensus.}
\label{notrob}
\end{figure}
For instance, one may take initial conditions $x_{1,1}(0)=x_{1,2} (0) = x_{3,1} (0) = x_{3,2}(0)= 1$. For a $\emptyset$-controlled siphon this results in solutions
fulfilling $x_{1,1}(t)=x_{1,2} (t) = x_{3,1} (t) = x_{3,2}(t)= 1$ for all $t$. At the same time, one may let the malicious agents fulfill $x_{3,3}(t)=x_{2,2} (t)=0$ so that,
any solution with $x_{2,3}(0)=0$ would result in $x_{2,3}(t)=0$ identically, thus preventing consensus. A similar issue arises, letting agents $x_{1,1},x_{1,2},x_{3,1},x_{3,2}$ be initialized with negative values,
while $x_{2,2}$ and $x_{3,3}$ oscillate at some higher values as shown in Fig. \ref{notrob}.

Due to simmetry of the considered network, it follows that selection of any two faulty agents result in the conditions for robust consensuability to be violated. 

We emphasize that the considered network is not robust with respect to Byzantine malicious agents. In particular, in the simulation we show the result of  agent $(2,2)$ broadcasting higher values to agents $(1,2)$, $(2,1)$ than the one broadcast to agents $(3,2)$, $(2,3)$. The Byzantine agent is initialized with $x_{2,2} (0)=0$ and does not change his position, while it sends the value $+2$ and $-2$ to its neighbors, thus disrupting consensus as shown in Fig. \ref{byz}. It is worth pointing out that if the agent had consistently sent the same information to all his neighbours robust consensus would have been achieved.
\begin{figure}
\centerline{
\includegraphics[width=10cm]{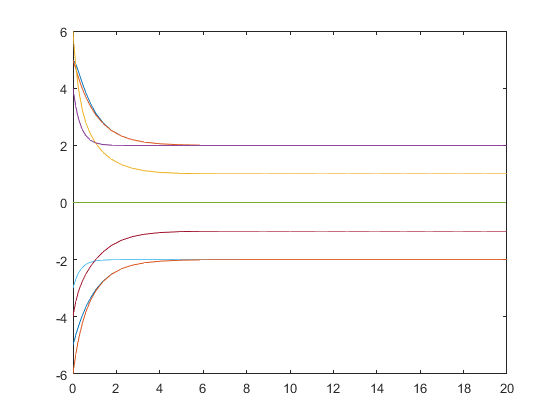}}
\caption{Byzantine agent $(2,2)$ disrupts consensus.}
\label{byz}
\end{figure}

\section{Comparison with ARC-P protocols}
\label{arcpsection} 
Adversarily Robust Consensus was first introduced by LeBlanc and coworkers in \cite{leblanc1}. This is proposed, initially, for all-to-all networks and later extended in \cite{leblanc3} to networks with more general topologies.
We start this Section by highlighting how the all-to-all topology considered in \cite{leblanc1} can be seen as a specific type of symmetric joint-agent interaction. Similar considerations apply when the set of neighbours of each agent is a proper subset of $\{1,2, \ldots, n \}$ but, for the sake of simplicity, this is not illustrated in detail.
Let $\bar{\sigma}_k(x)$ denote the $k$-th largest entry in $x$ and, similarly, $\underline{\sigma}_k (x)$ the $k$-th smallest entry in $x$.
We see that,
\[ \bar{\sigma}_k (x) = \max_{J \subset \mathcal{N}: |J|= k  } \;  \min_{k \in J} x_k \]
\[ \underline{ \sigma }_k (x) =  \min_{J \subset \mathcal{N}: |J|= k }  \;  \max_{k \in J} x_k. \]
 Moreover, $\bar{ \sigma }_k (x) = \underline{ \sigma }_{n+1-k} (x)$, and therefore for any integer $F$ with $n-F \geq F+1$ we see
\[ 
\sum_{k= F+1}^{n-F}  \bar{\sigma}_k (x) =  \sum_{k= F+1}^{n-F} \underline{\sigma}_k (x). \]
Consider next the protocol described by the following set of equations:
\begin{equation}
\label{forcomparison}
 \dot{x}_i = -x_i + \frac{ \sum_{k= F+1}^{n-F}  \bar{\sigma}_k (x)}{n-2F}. 
\end{equation}
This is, essentially, a continuous-time version of the algorithm proposed in \cite{leblanc1}, where each agent is directed towards the average of the $n-2F$ agents' opinions of intermediate value (as achieved in \cite{leblanc1} by using the sorting and reducing maps).
It is easy to see that (\ref{forcomparison}) is a monotone cooperative network, moreover, we claim that for all $J$ of cardinality $F+1$ and any agent $i$ it holds $J \rightarrow i$.
To this end, let $J$ be a subset of cardinality $F+1$ and let $x_J>x_i$ be the common value associated with agents in $J$. All agents not in $J$, including agent $i$ have value $x_i$, instead. 
Clearly $\bar{\sigma}_k(x) = x_J$ for all $k=1 \ldots F+1$ and $\bar{ \sigma}_k (x) = x_i$ for all $k= F+2 \ldots n$.
In particular, then:
\[ \dot{x}_i = - x_i + \frac{ x_J + (n - 2 F - 1) x_i}{n - 2F} = \frac{ x_J - x_i }{ n -2F }, \] 
which proves a joint influence of agents in $J$ towards $i$ from above. Similar results hold when $x_J < x_i$. Moreover,
$J \rightarrow i$ is a minimal influence. In fact, any proper subset of $J$ consists of at most $F$ elements and therefore, assuming their value is $x_J$ while $x_i$ is the value of other agents, a simple  computation shows that
\[ \dot{x}_i = - x_i + \frac{ (n - 2F) x_i }{n - 2F} = 0. \]
thus ruling out the possibility of joint-influences for sets of agents of cardinality $F$ or lower.
Similar computations can be carried out when the network is not of all-to-all type, and each agent has a specific set of neighbours that get sorted, reduced and averaged upon.
 
\section{Conclusions}
This paper has explored tight necessary and sufficient conditions for continuous-time Adversary Consensus Protocols of networks with joint-agent interactions of arbitrary topology. This captures, as a particular case, the notion of ARC consensus studied in the discrete-time case by Leblanc and co-workers, using \emph{sorting} and \emph{selection} maps.
Consensus is achieved in the face of agents that behave as arbitrary bounded disturbances, and are only constrained to broadcasting the same information to all of the neighbours they have an influence upon.
In this respect, the problem of \emph{Byzantine} consensus, where agents may maliciously or unintenionally send different information to distinct neighbours is an interesting open question for further research.
Conditions are formulated in the language of Petri Nets, in particular making use of the notion of \emph{controlled siphon}, in which faulty agents play the role of a `switch' capable of disabling some influences by suitably positioning itself above or below the value of agents within the same joint-agent interaction.
An example is presented to illustrate the applicability of the considered results. This does not fall within the class of networks considered in \cite{leblanc3} since each agent is only has two distint group of neighbours (vertical and horizontal ones in the picture) which are treated separately when cross-validating information in joint-agent interactions). In particular, the equations considered can never be achieved by means of sorting and selection functions.


\begin{thebibliography}{99}
\bibitem{angelimanfredijoint}  D. Angeli and S. Manfredi, On consensus protocols allowing joint-agent interactions, \emph{IEEE Conf. on Decision and Control} submitted.
\bibitem{abbas} W. Abbas, Y. Vorobeychik and X. Koutsoukos, Resilient Consensus Protocol in the Presence of Trusted Nodes, \emph{7th International Symposium on Resilient Control Systems}, Denver (CO), U.S., August 2014. 
\bibitem{murray_survey} R. Olfati-Saber,  J.A. Fax and R.M. Murray, Consensus and Cooperation in
Networked Multi-Agent Systems, \emph{Proceedings of the IEEE}, vol. 95, N. 1, pp. 215-233, 2007. 


\bibitem{Hendrickx12}
J. M. Hendrickx and J. N. Tsitsiklis. Convergence of
type-symmetric and cut-balanced consensus seeking systems. {\it IEEE
Transactions on Automatic Control, 58, 1, 2013}.
\bibitem{Martin16} S. Martin, J. M. Hendrickx, Continuous-time consensus under non-instantaneous reciprocity, in \textit{IEEE Transactions on Automatic Control, 61, 9, 2484 –-2495, 2016}.
\bibitem{Moreau05} L. Moreau. Stability of multiagent systems with
time-dependent communication links. {\it IEEE Transactions on Automatic Control, 50, 2, 2005}. 
\bibitem{leblanc1} H. LeBlanc and X. Koutsoukos, Consensus in Networked Multi-Agent Systems with Adversaries, \textit{Proc. of HSCC'2011}, Chicago, IL, USA, 2011.
\bibitem{leblanc2} H. LeBlanc and X. Koutsoukos, Low Complexity Resilient Consensus in Networked Multi-Agent Systems with Adversaries, \textit{Proc. of HSCC'2012}, Beijing, China, 2012.
\bibitem{leblanc3} H. LeBlanc, H. Zhang, S. Sundaram and X. Koutsoukos, Consensus of Multi-Agent Networks in the Presence of Adversaries Using Only Local Information, \textit{Proc. of HiCoNS'12}, Beijing, China, 2012. 
\bibitem{Maggiore07} Z. Lin, B. Francis, and M. Maggiore. State agreement for continuous time coupled nonlinear systems. \textit{SIAM J. Contr., 46, 1}, 2007.
\bibitem{Manfredi_TAC} S. Manfredi, D. Angeli. A criterion for exponential consensus of time-varying non-monotone nonlinear networks. {\it IEEE Transactions on Automatic Control, 62, 5, 2483--2489, 2016}.

\bibitem{ourpaper} S. Manfredi, D. Angeli. Necessary and Sufficient Conditions for Consensus in Nonlinear Monotone Networks with Unilateral Interactions. 
\textit{Automatica, 77, 51 –- 60}, 2017.

\bibitem{Conf1} Y. D. Zhong, V. Srivastava and N. E. Leonard. On the Linear Threshold Model for Diffusion of Innovations in Multiplex Social Networks.
\textit{To appear in Proc. IEEE Conference on Decision and Control}, 2017.

\bibitem{Conf2} M. A. Javarone. Social Influences in Opinion Dynamics: the Role of Conformity. \textit{Physica A: Statistical Mechanics and its Applications}, 2014.

\bibitem{Conf3} C. Alós-Ferrer, S. Hügelschäfer and J. Li. Inertia and Decision Making.  \textit{Frontiers in Psychology, Volume 7}, 2016.

\bibitem{Conf4} M. Baddeley Herding, social influence and economic decision-making: socio-psychological and neuroscientific analyses.  \textit{Phil. Trans. R. Soc. B, 365, 281–-290}, 2016.

\bibitem{tsitsi} D.P. Bertsekas and J.N. Tsitsiklis,\emph{Parallel and Distributed Computation: Numerical Methods}, Prentice Hall, 1989.
\bibitem{boyd} S. Boyd, N. Parikh, E. Chu, B. Peleato and J. Eckstein, \emph{Distributed Optimization and Statistical Learning via the Alternating Direction Method of Multipliers}, Foundations and Trends in Machine Learning, Vol. 3, N.1., p. 1-122, 2010
\bibitem{Altafini13}
 C. Altafini, Consensus problems on networks with antagonistic interactions,
\textit{IEEE Transactions on Automatic Control, Vol. 58, No. 4, 2013}.
\bibitem{Manfredi_Auto15} S. Manfredi, D. Angeli. Frozen state conditions for exponential consensus of time-varying cooperative nonlinear networks. {\it Automatica, 63, 182 -- 189, 2016}. 
\bibitem{manifold} A. Sarlette and R. Sepulchre, Consensus Optimization on manifolds, \emph{SIAM Journal on Control and Optimization}, Vol. 48, N.1, pp. 56-76, 2009.


\bibitem{Survey1} P. Resnick, K. Kuwabara, R. Zeckhauser, E. Friedman,
{\it Reputation systems}, ACM 43, 12, 45–48., 2000
\bibitem{Survey2} S.-R. Yan, X.-L. Zheng, Y. Wang, W.W. Song, W.-Y. Zhang, \textit{A graph-based comprehensive reputation model: Exploiting the social context of opinions
to enhance trust in social commerce, Inform}. Sci. 318, 51–72, 2015
\bibitem{Survey3} X.-L.Zheng,C.-C.Chen,J.-L.Hung,W.He,F.-X.Hong,Z.Lin, \textit{A hybrid trust-based recommender system for online communities of practice}, IEEE Transactions Learn. Technol. 8, 4, 345–356, 2015.
\bibitem{Survey4} B. Mortazavi, and G. Kesidis, \textit{Cumulative Reputation Systems for Peer-to-Peer Content Distribution,} Proceeding NOSSDAV '03 Proceedings of the 13th international workshop on Network and operating systems support for digital audio and video 144-152, 2003
\bibitem{Survey5}  A. Jsang, R. Ismail, and C. Boyd. \textit{A survey of trust and reputation systems for online service provision.} Decision Support Systems, 43, 2, 618-644, 2007

\bibitem{back1} W.Jiang, J.Wu, F.Li,G.Wang, H.Zheng,\textit{Trust evaluation in online social networks using generalized flow}, IEEE Transactions Comput. 65,3, 952–963, 2016
\bibitem{back2} K. Fujimura, T. Nishihara, \emph{Reputation rating system based on past behavior of evaluators}, in: Proceedings of the 4th ACM Conference on Electronic Commerce, EC’03, ACM, New York, NY, USA, 246–247, 2003
\bibitem{back3} X.-L. Liu, Q. Guo, L. Hou, C. Cheng, J.-G. Liu, \textit{Ranking online quality and reputation via the user activity}, Physica A 436, 629–636, 2015
\bibitem{back4} Liao H, Zeng A, Xiao R, Ren Z-M, Chen D-B, Zhang Y-C \textit{Ranking Reputation and Quality in Online Rating Systems}. PLoS ONE 9, 5: e97146, 2014
\bibitem{back5} S. Kamvar, M. Schlosser, and H. Garcia-Molina. \textit{The Eigen- Trust algorithm for reputation management in P2P networks}. In Proceedings of the Twelwth International World-Wide Web Conference (WWW03), pages 446–458, 2003.
\bibitem{back6} M. Kinateder and K. Rothermel. \textit{Architecture and algo- rithms for a distributed reputation system.} In Proceedings of Trust Management: First International Conference (iTrust 2003), LNCS, pages 1–16. Springer-Verlag, May 2003.

\bibitem{Group_rank1} J.Gao,Y.-W.Dong,M.-S.Shang,S.-M.Cai,T.Zhou,
\textit{Group-based ranking method for on line rating systems with spamming attacks}, Europhys. Lett.110, 2, 28003, 2015
\bibitem{Group_rank2} Jian Gao,  Tao Zhou,  \textit{Evaluating user reputation in online rating systems via an iterative group-based ranking method,} Physica A 473, 546–560, 2017
\bibitem{pease} M. Pease, R. Shostak and L. Lamport, Reaching Agreement in the Presence of Faults, \emph{Journal of the ACM}, Vol. 27, N. 2, pp. 228-234, 1980.
\end{thebibliography}
\end{document}